\newcommand{\N}{\mathbb{N}}    
\newcommand{\R}{\mathbb{R}}    
\newcommand{\la}{\langle}
\newcommand{\ra}{\rangle}
\newcommand{\be}{\begin{equation}}
\newcommand{\ee}{\end{equation}}
\newcommand{\ba}{\begin{align}}
\newcommand{\ea}{\end{align}}
\newcommand{\bea}{\begin{eqnarray*}}
\newcommand{\eea}{\end{eqnarray*}}
\newtheorem{thm}{Theorem}[section]
\newtheorem{lem}[thm]{Lemma}
\newtheorem{rem}[thm]{Remark}
\begin{document}
\title{
A Sharp Restricted Isometry Constant Bound of Orthogonal Matching Pursuit
}
\author{
Qun Mo
\thanks
{
Research supported in part by the NSF of China under grant
10971189 and 11271010, and by the fundamental research funds
for the Central Universities.
}
\thanks
{
Q. Mo is with the Department of Mathematics, Zhejiang University,
Hangzhou, 310027, China (e-mail: moqun@zju.edu.cn ).
}
}

\maketitle

\begin{abstract}
We shall show that
if the restricted isometry constant (RIC) $\delta_{s+1}(A)$
of the measurement matrix $A$ satisfies
$$
\delta_{s+1}(A) < \frac{1}{\sqrt{s + 1}},
$$
then the greedy algorithm Orthogonal Matching Pursuit(OMP)
will succeed. That is, OMP
can recover every $s$-sparse signal $x$ in $s$ iterations from
$b = Ax$. Moreover, we shall show the upper bound of RIC is sharp
in the following sense. For any given $s \in \N$, we shall construct
a matrix $A$ with the RIC
$$
\delta_{s+1}(A) = \frac{1}{\sqrt{s + 1}}
$$
such that OMP may not recover some $s$-sparse signal $x$ in $s$
iterations.
\end{abstract}

\begin{IEEEkeywords}
Compressed sensing, restricted isometry property,
orthogonal matching pursuit, sparse signal reconstruction.
\end{IEEEkeywords}

\IEEEpeerreviewmaketitle

\section{Introduction}
\IEEEPARstart{W}{e} shall give a sharp RIC bound of OMP in this paper.
First let us review some basic concepts. Suppose we have a signal $x$
in $\R^N$ where $N$ is very large. For instance, $x$ represents a digit
photo of $1024\times1024$ pixels and $N = 1024^2 \approx 10^6$.
Denote $\|x\|_0$ to be the number of nonzero entries of $x$. We say $x$
is $s$-sparse if $\|x\|_0 \leq s$; and $x$ is sparse if $s \ll N$. In real
world, there are many signals are sparse
or can be well approximated by sparse signals, either under the canonical basis
or other special basis/frames. For simplicity, we assume that
$x$ is sparse under the canonical basis in $\R^N$.

Suppose we have some linear measurements of a signal $x$. That is, we have
$b = Ax$ where $A \in \R^{m \times N}$ and $b \in \R^m$ are given.
The key idea of compressed sensing \cite{CRT,D} is that
it is highly possible to retrieve
the sparse signal $x$ from those linear measurements,
while the number of measurements is far less than the dimension of the
signal, i.e., $m \ll N$.

To retrieve such a sparse signal $x$, a natural method is to solve the
following $l_0$ problem
 \be\label{p0}
 \min_{x} \|x\|_0 \quad \text{subject to} \quad Ax = b
 \ee
where $A$ and $b$ are known. To ensure the $s$--sparse solution is
unique, we would like to use the \emph{restricted isometry
property} (RIP) which was introduced by Cand\`es and Tao in
\cite{CT}. A matrix $A$ satisfies the RIP of order $s$ with the
\emph{restricted isometry constant} (RIC) $\delta_s = \delta_s(A)$
if $\delta_s$ is the smallest constant such that
 \be\label{RIP}
 (1-\delta_s)\|x\|_2^2
 \leq
 \|A x \|_2^2
 \leq
 (1+\delta_s)\|x\|_2^2
 \ee
holds for all $s$-sparse signal $x$.

If $\delta_{2s}(A) < 1$, the $l_0$ problem has a unique $s$-sparse
solution \cite{CT}. The $l_0$
problem is equivalent to the $l_1$ minimization problem when
$\delta_{2s}(A) < \sqrt{2}/2$,
please see \cite{C, ML, CZ} and the references therein.

OMP is an efficient greedy algorithm to solve the
$l_0$ problem \cite{DMA, PRK, T}. To introduce this algorithm, we
denote $A_{\Omega}$ by the matrix $A$ with indices of its columns in $\Omega$
and we denote $x_{\Omega}$ as the similar restriction of some vector $x$.
The following iterative algorithm shows the framework of OMP.

\begin{center}
\begin{itemize}
\item [] \textbf{Input}: $A$, $b$
\item [] \textbf{Set}: $\Omega_0=\emptyset$, $r_0 = b$, $k=1$
\item [] \textbf{while not converge}
\begin{itemize}
\item $\Omega_{k}= \Omega_{k-1} \cup \arg\max_{i} |\langle
r_{k-1}, A e_i \rangle|$
\item
$x_k = \arg\min_z \| A_{\Omega_k} z - b \|_2$
\item
$r_k = b - A_{\Omega_k} x_k$
\item
$k = k+1$
\end{itemize}
\item [] \textbf{end while}
\item [] $\hat{x}_{\Omega_k} = x_k$,
$\hat{x}_{\Omega_k^C} = 0$
\item [] \textbf{Return} $\hat{x}$
\end{itemize}
\end{center}

Now let us consider the conditions in terms of RIC for OMP to
exactly recover any $s$-sparse signal in $s$ iterations.
Davenport and Wakin \cite{DW} have proven that
$\delta_{s+1}(A) < {1}/{(3\sqrt{s})}$ is sufficient. Later, Liu and
Temlyakov \cite{LT} have improved the condition to
$\delta_{s+1}(A) < {1}/{((\sqrt{2} + 1)\sqrt{s} \,)}$.
Furthermore, Mo and Shen \cite{MS} have pushed the sufficient condition to
$\delta_{s+1}(A) < {1}/{(\sqrt{s} + 1)}$ and
have shown by detailed examples that a necessary condition
is $\delta_{s+1}(A) < {1}/{\sqrt{s}}$.
Later, Yang and Hoog \cite{YH} improved the
sufficient condition to
$ \delta_{s}(A) + \sqrt{s} \, \delta_{s+1}(A) < 1$.

There is a small gap open between the sufficient condition and the necessary
condition of the best results. And we would like to fill out the gap
in this paper.

The content of this paper is consisted by two parts.
\begin{itemize}
\item We shall prove that the condition
$$
\delta_{s+1}(A) < \tfrac{1}{ \sqrt{s + 1} }
$$
is sufficient for OMP to exactly recover any $s$-sparse $x$ in $s$
iterations.

\item For any positive integer $s$, we shall construct a matrix with
$\delta_{s+1}={1}/{ \sqrt{s + 1}}$ such that OMP may fail for at least
one $s$-sparse signal in $s$ iterations.

\end{itemize}

\section{Preliminaries}\label{section1}
Before going further, let us introduce some notations. For
$k = 1, 2, \cdots, N$, define
$e_k$ to be the $k$-th element in the canonical basis of $\R^N$.
That is, $e_k$ is the 1-sparse vector in $\R^N$ such that only the $k$-th
entry of $e_k$ is 1. For a given matrix $A$
and a given signal $x$, define
$$
S_k := \langle Ax, A e_k \rangle, \quad k=1,\ldots, N.
$$
Denote $ S_0: = \max_{k\in\{1,\ldots, s\}} |S_k| $. The following
two lemmas are useful in our analysis.

\begin{lem}\label{lem1}
For any $s > 0$, define $t := \pm (\sqrt{s + 1} - 1)/\sqrt{s}$. Then we have
$t^2 < 1$ and
\be
\label{eq_sqr}
\begin{aligned}
& \| A(x + t e_k) \|_2^2
- \| A(t^2 x - t e_k) \|_2^2
= \\
& (1 - t^4) \left( \la Ax, Ax \ra \pm \sqrt{s} \la Ax, A e_k \ra \right)
\;\; \forall k = 1, \cdots, N.
\end{aligned}
\ee
\end{lem}

\begin{proof}
By the definition of $t$, $t^2 = (\sqrt{s + 1} - 1)^2/s =
(\sqrt{s + 1} - 1)/(\sqrt{s + 1} + 1) < 1$.
Moreover, by direct expanding and simplification, we have
$$
\| A(x + t e_k) \|_2^2 - \| A(t^2 x - t e_k) \|_2^2
= (1 - t^4) \left( \la Ax, Ax \ra + \dfrac{2t}{1 - t^2} \la Ax, A e_k \ra \right)
.
$$
Now we only need to verify that ${2t}/({1 - t^2}) = \pm \sqrt{s}$.
This is indeed true since
$$
\dfrac{2t}{1 - t^2} = \pm 2 \dfrac{\sqrt{s + 1} - 1}{\sqrt{s}} /
\left(1 - \dfrac{\sqrt{s + 1} - 1}{\sqrt{s + 1} + 1} \right)
= \pm \sqrt{s}
.
$$
\end{proof}

\begin{lem}\label{lem2}
If the restricted isometry constant $\delta_{s+1}(A)$ satisfies
 \be\label{delta}
 \delta_{s+1}(A) < \frac{1}{ \sqrt{s + 1} }
 \ee
and the support of the signal $x$ is a non-empty subset of
$\{1,2,\ldots,s\}$, then $S_0> |S_k|$ for $k > s$.
\end{lem}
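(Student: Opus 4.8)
The plan is to fix an arbitrary index $k > s$ and prove $|S_k| < S_0$; since $S_0$ and every $S_k$ are linear in $x$, the inequality to be proved is invariant under rescaling $x$, so I may assume $\|x\|_2 = 1$. The whole argument rests on comparing two estimates for $\|Ax\|_2^2$: an upper bound coming from the support of $x$, and a lower bound coming from Lemma~\ref{lem1} together with the RIP.

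For the support side, since $\mathrm{supp}(x)\subseteq\{1,\dots,s\}$, I would write $\|Ax\|_2^2=\la Ax,Ax\ra=\sum_{i\in\mathrm{supp}(x)}x_iS_i$ and apply Cauchy--Schwarz, using $\|x\|_1\le\sqrt{s}\,\|x\|_2=\sqrt s$ together with $|S_i|\le S_0$ for $i\le s$, to obtain
$$\|Ax\|_2^2\le S_0\,\|x\|_1\le \sqrt s\,S_0 .$$
This is the only place where the support structure enters. For the lower bound I would invoke Lemma~\ref{lem1} with the \emph{sign chosen so that} $\pm\sqrt s\,S_k=-\sqrt s\,|S_k|$. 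Because $k\notin\mathrm{supp}(x)$, both $x+te_k$ and $t^2x-te_k$ are $(s+1)$-sparse and $\la x,e_k\ra=0$, so $\|x+te_k\|_2^2=1+t^2$ and $\|t^2x-te_k\|_2^2=t^4+t^2$. Applying the RIP \emph{in the matching direction}, namely $\|A(x+te_k)\|_2^2\ge(1-\delta)(1+t^2)$ and $\|A(t^2x-te_k)\|_2^2\le(1+\delta)(t^4+t^2)$ with $\delta:=\delta_{s+1}(A)$, the identity \eqref{eq_sqr} yields
$$(1-t^4)\bigl(\|Ax\|_2^2-\sqrt s\,|S_k|\bigr)\ge(1-\delta)(1+t^2)-(1+\delta)(t^4+t^2)=:M .$$

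Rearranging this and inserting the support bound gives $\sqrt s\,|S_k|\le\|Ax\|_2^2-M/(1-t^4)\le\sqrt s\,S_0-M/(1-t^4)$, so it only remains to check that $M>0$ under the hypothesis \eqref{delta}. A short computation rewrites $M=(1-t^4)-\delta(1+t^2)^2$, whence $M>0$ is equivalent to $\delta<(1-t^2)/(1+t^2)$. Using the closed forms $1-t^2=2/(\sqrt{s+1}+1)$ and $1+t^2=2\sqrt{s+1}/(\sqrt{s+1}+1)$ coming from Lemma~\ref{lem1}, the threshold is exactly $(1-t^2)/(1+t^2)=1/\sqrt{s+1}$, so \eqref{delta} forces $M>0$ and hence $|S_k|<S_0$. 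Since $k>s$ was arbitrary, this proves $S_0>|S_k|$ for all $k>s$; the equality $M=0$ at $\delta=1/\sqrt{s+1}$ is what leaves room for the sharp counterexample later.

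The main obstacle, and the crux of obtaining the \emph{sharp} constant rather than a weaker one, is precisely the choice of sign and RIP direction in the displayed inequality above. If one instead selects $+\sqrt s\,|S_k|$ and bounds $\|A(x+te_k)\|_2^2$ from above, then $\|Ax\|_2^2$ enters with the wrong sign and must be estimated \emph{from below}; the only available bound is the lossy $\|Ax\|_2^2\ge(1-\delta)$, whose use cancels the gain and produces only a threshold of order $1/(\sqrt{s+1}+2)$. The arrangement above is engineered so that $\|Ax\|_2^2$ appears additively and is controlled by the tight support inequality $\|Ax\|_2^2\le\sqrt s\,S_0$, which respects the extremal equal-entry configuration and lets the exact threshold $1/\sqrt{s+1}$ emerge.
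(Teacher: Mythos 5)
Your proposal is correct and follows essentially the same route as the paper: the upper bound $\|Ax\|_2^2 \le \sqrt{s}\,S_0$ from the support, the identity of Lemma~\ref{lem1}, and the RIP lower bound on $\|A(x+te_k)\|_2^2 - \|A(t^2x-te_k)\|_2^2$ leading to the threshold $(1-t^2)/(1+t^2)=1/\sqrt{s+1}$. The only (cosmetic) difference is that you choose the sign of $t$ according to the sign of $S_k$ so as to treat $|S_k|$ in one pass, where the paper proves $S_0>S_k$ and then says ``similarly'' for $-S_k$.
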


\begin{proof}
Notice this lemma keeps unchanged if we replace $x$ by $cx$ with
some non-zero scalar $c$. Therefore, we can assume that $\| x \|_2 = 1$.
For the given $s$-sparse $x$, we obtain
$$
\begin{aligned}
 & \langle Ax,Ax \rangle
 = \left \la A \sum_{k=1}^s{x_k e_k},Ax \right \ra
\\ & = \sum_{k=1}^s x_k \la A e_k, Ax\ra
\\ & = \sum_{k=1}^s x_k S_k
\\ & \leq S_0 \|x\|_1
\\ & \leq S_0 \sqrt{s} \|x\|_2
\\ & = \sqrt{s} S_0.
\end{aligned}
$$
Define $t := -(\sqrt{s + 1} - 1)/\sqrt{s}$. Then by the above
inequality and Lemma~\ref{lem1}, we have
\be
\label{eq2}
\begin{aligned}
& (1 - t^4) \sqrt{s}(S_0 - S_k)
\geq (1 - t^4) (\la Ax,Ax \ra - \sqrt{s} \la Ax, A e_k \ra)
\\
= & \| A(x + t e_k) \|_2^2 - \| A(t^2 x - t e_k) \|_2^2.
\end{aligned}
\ee
Moreover, for any $k > s$, by the definition of $\delta_{s + 1}(A)$,
we obtain
\be
\label{eq3}
\begin{aligned}
& \| A(x + t e_k) \|_2^2 - \| A(t^2 x - t e_k) \|_2^2
\\ & \geq \left( 1 - \delta_{s + 1}(A) \right) (\|x\|_2^2 + t^2)
- t^2 \left( 1 + \delta_{s + 1}(A) \right)(t^2 \|x\|_2^2 + 1)
\\ & = (1 + t^2)\left(1 - t^2 - (1 + t^2) \delta_{s + 1}(A) \right)
\\ & = (1 + t^2)^2 \left( (1 - t^2)/(1 + t^2) - \delta_{s + 1}(A) \right)
.
\end{aligned}
\ee
By the definition of $t$, we have
\be
\label{eq4}
\begin{aligned}
(1 - t^2)/(1 + t^2) =
\left( 1 - \dfrac{\sqrt{s + 1} - 1}{\sqrt{s + 1} + 1} \right)/
\left( 1 + \dfrac{\sqrt{s + 1} - 1}{\sqrt{s + 1} + 1} \right)
= 1/\sqrt{s + 1}
.
\end{aligned}
\ee
Now combining (\ref{eq2}), (\ref{eq3}), (\ref{eq4})
and condition (\ref{delta}), we obtain
$$
\begin{aligned}
 & (1 - t^4) \sqrt{s}(S_0 - S_k)
\\ & \geq \| A(x + t e_k) \|_2^2 - \| A(t^2 x - t e_k) \|_2^2
\\ & \geq (1 + t^2)^2 \left( (1 - t^2)/(1 + t^2) - \delta_{s + 1}(A) \right)
\\ & = (1 + t^2)^2 \left( 1/\sqrt{s + 1} - \delta_{s + 1}(A) \right)
\\ & > 0
.
\end{aligned}
$$
Therefore, we have $S_0 > S_k$ for all $k > s$.
Similarly we can prove that $S_0 > -S_k$ for all $k > s$.
Hence, $S_0 > |S_k|$ for all $k > s$.
\end{proof}

\section{Main Results}
Now we are ready to show the main results of this paper.

\begin{thm}\label{thm1}
Suppose that $A$ satisfies
 \be\label{key}
 \delta_{s+1}(A) < \frac{1}{\sqrt{s+1}},
 \ee
then for any $s$-sparse signal $x$, OMP will recover $x$ from
$b = Ax$ in $s$ iterations.
\end{thm}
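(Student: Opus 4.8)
The plan is to run the standard greedy--selection argument: use Lemma~\ref{lem2} to guarantee that each OMP iteration picks a column inside the true support, and use the orthogonality built into the least--squares step to guarantee that the chosen index is new. Throughout, let $T := \mathrm{supp}(x)$ and $\tau := |T| \le s$; after permuting the coordinates we may assume $T \subseteq \{1,\dots,s\}$, and by the scaling invariance noted in Lemma~\ref{lem2} there is no loss in normalizing $x$. If $x=0$ there is nothing to prove, so assume $\tau \ge 1$. I would argue by induction on the iteration counter $k$ the statement that after step $k$ one has $\Omega_k \subseteq T$ with $|\Omega_k| = k$, for $0 \le k \le \tau$; the base case $k=0$ is $\Omega_0 = \emptyset$.

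For the inductive step, assume $\Omega_{k-1} \subseteq T$ with $|\Omega_{k-1}| = k-1 < \tau$. Writing $\hat x_{k-1} \in \R^N$ for $x_{k-1}$ padded with zeros off $\Omega_{k-1}$, the residual is $r_{k-1} = b - A_{\Omega_{k-1}} x_{k-1} = A z$ with $z := x - \hat x_{k-1}$ supported on $T \cup \Omega_{k-1} = T$. First I would check that $r_{k-1} \ne 0$: if $r_{k-1}=0$ then $Az = 0$ for a vector $z$ supported on $T$ of size $\tau \le s < s+1$, and since (\ref{key}) forces $\delta_{s+1}(A) < 1$, the matrix $A$ is injective on $(s+1)$--sparse vectors, so $z=0$; this makes $x = \hat x_{k-1}$ supported on $\Omega_{k-1} \subsetneq T$, contradicting $\mathrm{supp}(x)=T$. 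Hence $z \ne 0$.

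Next I would apply Lemma~\ref{lem2} after relabeling $T$ as $\{1,\dots,\tau\}$ and replacing $s$ by $\tau$. This is legitimate because $\delta$ is nondecreasing in its order while $1/\sqrt{\cdot}$ is decreasing, so $\delta_{\tau+1}(A) \le \delta_{s+1}(A) < 1/\sqrt{s+1} \le 1/\sqrt{\tau+1}$, and the lemma's hypothesis holds for the nonzero source $z$. Its conclusion reads $\max_{i \in T} |\la r_{k-1}, A e_i \ra| > |\la r_{k-1}, A e_j \ra|$ for every $j \notin T$, so the maximizer chosen by OMP lies in $T$. To rule out re--selection, note that because $x_{k-1}$ solves $\min_z \|A_{\Omega_{k-1}} z - b\|_2$, the residual is orthogonal to every already chosen column, i.e. $\la r_{k-1}, A e_i\ra = 0$ for all $i \in \Omega_{k-1}$. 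The maximal correlation over $T$ is strictly positive, since otherwise $\|r_{k-1}\|_2^2 = \la Az, Az\ra = \sum_{i\in T} z_i \la r_{k-1}, A e_i\ra = 0$, contradicting $r_{k-1}\ne 0$; hence the maximizer is attained at an index of $T$ with nonzero correlation, which cannot belong to $\Omega_{k-1}$. Thus $\Omega_k = \Omega_{k-1} \cup \{i\}$ for some $i \in T \setminus \Omega_{k-1}$, which completes the induction.

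After $\tau \le s$ iterations one reaches $\Omega_\tau = T$, and the least--squares solution then reproduces $x$ exactly (again by injectivity of $A$ on $T$), so the residual vanishes and OMP returns $\hat x = x$ within $s$ iterations. The only genuinely delicate points are the reduction of the lemma's parameter from $s$ down to $\tau$, justified by monotonicity of $\delta_{s+1}(A)$, and the clean separation of the two requirements at each step—correct support, supplied by Lemma~\ref{lem2}, versus a genuinely new index, supplied by orthogonality of the residual together with the strict positivity of the maximal correlation. Everything else is bookkeeping.
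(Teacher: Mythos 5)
Your proof is correct and follows essentially the same route as the paper: Lemma~\ref{lem2} drives the column--selection step and the orthogonality of the least--squares residual supplies the induction. The paper's own proof only verifies the first iteration explicitly and asserts the rest "by induction," whereas you supply the details it omits (the reduction from $s$ to $\tau=|T|$ via monotonicity of the RIC, the nonvanishing of the residual, and the strict positivity of the maximal correlation guaranteeing a fresh index), all of which check out.
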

\begin{proof}
Without loss of generality, we assume that the support of $x$ is a
subset of $\{1,2,\ldots, s\}$. Thus, the sufficient condition for OMP
choosing an index from $\{1,\ldots, s\}$ in the first iteration is
$$
S_0> |S_k |\ \text{for all} \ k > s.
$$
By Lemma \ref{lem2}, $\delta_{s+1} < \frac{1}{\sqrt{s+1}}$
guarantees the success of the first iteration of OMP. OMP makes an
orthogonal projection in each iteration. Hence, it can be proved by
induction that OMP always selects an index from the support of $x$
in $s$-iterations.
\end{proof}

\begin{thm}\label{thm2}
For any given positive integer $s$, there exist a $s$-sparse
signal $x$ and a matrix $A$ with the restricted isometry constant
$$
\delta_{s+1}(A) = \frac{1}{\sqrt{s + 1}}
$$
such that OMP may fail in $s$ iterations.
\end{thm}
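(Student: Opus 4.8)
The plan is to reverse-engineer the equality case of Lemma~\ref{lem2}. Take the ambient dimension to be $N = s+1$, let the target signal be the normalized all-ones vector $x = \tfrac{1}{\sqrt s}(e_1 + \cdots + e_s)$ supported on $\{1,\ldots,s\}$, and treat index $s+1$ as a decoy. Since $S_k = \la Ax, Ae_k\ra = (A^\top A\, x)_k$, OMP fails in its first step precisely when $|S_{s+1}| \geq S_0$. The goal, then, is to build a matrix $A$ for which every inequality in the proof of Lemma~\ref{lem2} degenerates to an equality, forcing $|S_{s+1}| = S_0$ exactly at the threshold $\delta_{s+1}(A) = 1/\sqrt{s+1}$; a tie-breaking choice in favor of index $s+1$ then makes OMP select a column outside the support and fail.

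The construction I would use is built from the two test vectors of Lemma~\ref{lem1}. Set $t = -(\sqrt{s+1}-1)/\sqrt s$ and put $v_- := x + t e_{s+1}$ and $v_+ := t^2 x - t e_{s+1}$. A direct computation (using that $x$ is supported away from $e_{s+1}$) shows $v_+^\top v_- = 0$, so these vectors can serve as eigenvectors for distinct eigenvalues. I would then define the Gram matrix $G := A^\top A$ on $\R^{s+1}$ as the rank-two perturbation of the identity
\be
\label{eq_gram}
G = I_{s+1} + \frac{1}{\sqrt{s+1}}\left( \frac{v_+ v_+^\top}{\|v_+\|_2^2} - \frac{v_- v_-^\top}{\|v_-\|_2^2} \right),
\ee
which has $v_+$ as an eigenvector with eigenvalue $1 + 1/\sqrt{s+1}$, $v_-$ as an eigenvector with eigenvalue $1 - 1/\sqrt{s+1}$, and eigenvalue $1$ on the orthogonal complement. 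Since its smallest eigenvalue $1 - 1/\sqrt{s+1}$ is positive, $G$ is symmetric positive definite, so $A := G^{1/2}$ is a genuine $(s+1)\times(s+1)$ measurement matrix with $A^\top A = G$.

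The verification then splits into three routine steps. First, because every $(s+1)$-sparse vector in $\R^{s+1}$ is just an arbitrary vector, the RIC of order $s+1$ equals $\max\{\lambda_{\max}(G)-1,\, 1-\lambda_{\min}(G)\}$, and the eigenvalues $1 \pm 1/\sqrt{s+1}$ and $1$ give $\delta_{s+1}(A) = 1/\sqrt{s+1}$ exactly. Second, I would compute $Gx$ from \eqref{eq_gram} using $v_+^\top x = t^2$ and $v_-^\top x = 1$, which yields $(Gx)_k = (1-\tfrac{1}{s+1})x_k$ for $k \le s$ and, after invoking the identity $(1-t^2)/(1+t^2) = 1/\sqrt{s+1}$ from \eqref{eq4}, gives $S_1 = \cdots = S_s = S_{s+1} = \sqrt{s}/(s+1)$, so $|S_{s+1}| = S_0$. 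Third, since $G$ is nonsingular the columns of $A$ are linearly independent, so $Ax = A\hat x$ forces $\hat x = x$; hence a first-iteration selection of the decoy index $s+1$ leaves OMP able to capture at most $s-1$ of the true support indices in $s$ iterations, and recovery of $x$ is impossible.

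The main obstacle is conceptual rather than computational: one must tie the \emph{tight} RIP directions to the specific vectors $v_\pm$ of Lemma~\ref{lem1} rather than to the naive symmetric configuration. If instead one simply demands equal correlations $S_1 = \cdots = S_{s+1}$ using orthonormal support columns and a decoy lying in their affine hull, the resulting Gram matrix has eigenvalues $1 \pm 1/\sqrt s$ and produces $\delta_{s+1} = 1/\sqrt s$, which only recovers the weaker necessary condition of \cite{MS}. Forcing the tie to occur at the sharp value $1/\sqrt{s+1}$ is exactly what \eqref{eq_gram} accomplishes, and confirming that no other direction inflates the RIC beyond $1/\sqrt{s+1}$ (handled by pinning the orthogonal complement to eigenvalue $1$) is the one point that requires care.
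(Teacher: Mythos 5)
Your construction is correct, and it verifies: $v_+^\top v_-=t^2\|x\|_2^2-t^2=0$, the Gram matrix in \eqref{eq_gram} is positive definite with spectrum $\{1+\tfrac{1}{\sqrt{s+1}},\,1-\tfrac{1}{\sqrt{s+1}},\,1,\dots,1\}$, hence $\delta_{s+1}(A)=1/\sqrt{s+1}$ (in ambient dimension $N=s+1$ the order-$(s+1)$ RIC is exactly the spectral deviation of $G$ from $I$), and the computation $Gx=\tfrac{s}{s+1}x+\tfrac{\sqrt{s}}{s+1}e_{s+1}$ gives the claimed tie $S_1=\cdots=S_{s+1}=\sqrt{s}/(s+1)$; full column rank of $A=G^{1/2}$ then rules out recovery after a bad first pick. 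The paper pursues the same overall strategy --- an $(s+1)\times(s+1)$ matrix whose decoy column is equally correlated with all support columns, tuned so the extreme eigenvalues of $A^\top A$ are $1\pm 1/\sqrt{s+1}$ --- but it simply exhibits $A$ explicitly (diagonal block $\sqrt{s/(s+1)}\,I_s$ plus a constant last column) and checks its spectrum by hand; its Gram matrix has $\tfrac{s}{s+1}I_s$ in the support block and bulk eigenvalue $s/(s+1)$, whereas yours has $I_s-\tfrac{1}{s(s+1)}J_s$ and bulk eigenvalue $1$ (the cross terms $\tfrac{1}{s+1}$ and corner entry $1+\tfrac{1}{s+1}$ coincide), so the two matrices are genuinely different members of the same family. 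What your route buys is motivation: by forcing equality in every step of Lemma~\ref{lem2} you explain \emph{why} the threshold $1/\sqrt{s+1}$ is the right one and why the naive symmetric example only reaches $1/\sqrt{s}$; what the paper's route buys is brevity and a fully explicit $A$ rather than an implicitly defined square root. The only cosmetic gap is that you give $A$ only as $G^{1/2}$, but since OMP's trajectory and the RIC depend on $A$ only through $A^\top A$ and $b=Ax$, this is harmless.
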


\begin{proof}
For any given positive integer $s$, let
$$
 A = \left(\begin{array}{cccc}
 & & & \frac{1}{\sqrt{s(s + 1)} } \\
 & \sqrt{\frac{s}{s + 1}} I_{s} & & \vdots \\
 & & & \frac{1}{\sqrt{s(s + 1)}} \\
 0 & \ldots& 0 & 1
\end{array}
\right)_{(s+1)\times (s+1)}.
$$
By simple calculation, we get
$$
 A^T A = \left(\begin{array}{cccc}
 & & & \frac{1}{s + 1} \\
 & \frac{s}{s + 1}I_{s}& & \vdots \\
 & & & \frac{1}{s + 1} \\
\frac{1}{s + 1} & \ldots& \frac{1}{s + 1} & 1 + \frac{1}{s + 1}
\end{array}
\right)_{(s+1)\times (s+1)},
$$
where $A^T$ denotes the transpose of $A$.
By direct calculation, we can verify that
$A^T A x = s x/(s + 1)$ for all $x \in \R^{s + 1}$ satisfying
$x_{s + 1} = 0$ and $\sum_{k = 1}^s x_k = 0$.
Therefore, $s/(s + 1)$ is an eigenvalue of $A^T A$ with multiplicity of $s - 1$.
Moreover, by direct calculation, we can see that
$A^T A x = (1 \pm 1/\sqrt{s + 1}) x$ with
$x = (1, 1, \ldots, 1, 1 \pm \sqrt{s + 1})^T$.
Therefore, $A^T A$ has another two eigenvalues $1 \pm 1/\sqrt{s + 1}$.
Thus, the restricted isometry constant $\delta_{s+1}(A)$ is
equal to $\tfrac{1}{\sqrt{s + 1}}$.
Now let
$$
x = (1,1,\ldots,1,0)^T\in \mathbb R^{s+1}.
$$
We have
$$
S_k = \langle Ae_k, Ax \rangle = \langle A^TAe_k, x \rangle
=s/(s + 1)\
\text{for all}\ k \in \{1, \ldots, s + 1\}.
$$
This implies OMP may fail in the first iteration. Since OMP chooses
one index in each iteration, we conclude that OMP may fail in $s$
iterations for the given matrix $A$ and the $s$-sparse signal $x$.
\end{proof}

\begin{rem}
For the case of measurements with noise, please see \cite{SL}, \cite{Cdr},
and the references therein.
\end{rem}

\begin{biographynophoto}
{Qun Mo} was born in 1971 in China. He has obtained a B.Sc. degree in 1994
from Tsinghua University, a M.Sc. degree in 1997 from Chinese Academy of
Sciences and a Ph.D. degree in 2003 from University of Alberta in Canada.

He is current an associate professor in mathematics in Zhejiang University.
His research interests include compressed sensing, wavelets and their applications.
\end{biographynophoto}

\end{document}